\definecolor{mpBlue}{RGB}{21, 101, 192}
\definecolor{mpGreen}{RGB}{46, 125, 50}
\newtheorem{theorem}{Theorem}
\newtheorem*{theorem*}{Theorem}
\newtheorem{definition}[theorem]{Definition}
\renewcommand{\paragraph}[1]{\addcontentsline{toc}{section}{#1}\emph{#1.}---}
\newcommand*{\bra}[1]{\langle #1 \@ifnextchar\ket{}{|}}
\newcommand*{\ket}[1]{| #1 \rangle \@ifnextchar\bra{\!}{}}
\newcommand*{\ketbra}[1]{| #1 \rangle \! \langle #1 |}
\DeclareMathOperator{\e}{e}
\DeclareMathOperator{\Ha}{\mathcal{H}}
\DeclareMathOperator{\bound}{\mathcal{B}}
\DeclareMathOperator{\I}{\mathds{1}}
\DeclareMathOperator{\Tr}{Tr}
\DeclareMathOperator{\state}{\mathcal{S}}
\DeclareMathOperator{\effect}{\mathcal{E}}
\newcommand*{\pol}{\mathrm{pol}}
\newcommand*{\pth}{\mathrm{path}}
\newcommand*{\cglmp}{CGLMP\textsubscript{4}\xspace}
\begin{document}

\title{Experimental implementation of dimension-dependent contextuality inequality}
\author{Emil H\r{a}kansson}
\affiliation{Institute for Quantum Information Science and Department of Physics \& Astronomy, Stockholm University, Stockholm, Sweden}
\affiliation{Hitachi Energy AB V\"{a}ster\r{a}s}
\author{Amelie Piveteau}
\affiliation{Institute for Quantum Information Science and Department of Physics \& Astronomy, Stockholm University, Stockholm, Sweden}
\author{Alban Seguinard}
\affiliation{Institute for Quantum Information Science and Department of Physics \& Astronomy, Stockholm University, Stockholm, Sweden}
\author{Muhammad Sadiq}
\affiliation{Institute for Quantum Information Science and Department of Physics \& Astronomy, Stockholm University, Stockholm, Sweden}
\author{Mohamed Bourennane}
\affiliation{Institute for Quantum Information Science and Department of Physics \& Astronomy, Stockholm University, Stockholm, Sweden}

\author{Otfried G\"{u}hne}
\affiliation{Naturwissenschaftlich-Technische Fakult\"{a}t, Universit\"{a}t Siegen, Walter-Flex-Stra\ss e 3, 57068 Siegen, Germany}

\author{Martin Pl\'{a}vala}
\affiliation{Institut f\"{u}r Theoretische Physik, Leibniz Universit\"{a}t Hannover, Hannover, Germany}
\affiliation{Naturwissenschaftlich-Technische Fakult\"{a}t, Universit\"{a}t Siegen, Walter-Flex-Stra\ss e 3, 57068 Siegen, Germany}

\begin{abstract}
We present a derivation and experimental implementation of a dimension-dependent contextuality inequality to certify both the quantumness and dimensionality of a given system. Existing methods for certification of the dimension of quantum system can be cheated by using larger classical systems, creating a potential loophole in these benchmarks. Our approach uses contextuality inequalities that cannot be violated by classical systems thus closing the previous loophole. We validate this framework experimentally with photons, observing violations of a CHSH-based contextuality inequality and surpassing the qutrit bound of the \cglmp{}-based contextuality inequality. These show that contextuality can be used for noise-robust tests of the number of qubits.
\end{abstract}

\maketitle

\paragraph{Introduction}%
Understanding and harnessing the non-classical features of quantum systems has been a central focus of quantum information for decades. These non-classical features have many forms, such as entanglement \cite{guhne2009entanglement}, Bell nonlocality \cite{rosset2014classifying,brunner2014bell}, or contextuality \cite{kochen1990problem,budroni2022kochen,spekkens2005contextuality,muller2023testing} and experimental proofs certifying the presence of these non-classical features had significant impact on our understanding of physics. In order to progress towards practical applications of quantum information, it is necessary to not only experimentally verify the fundamental non-classical features of experimental systems at hand, but we must also develop meaningful benchmarks that test the quantum devices that are currently available. One of the most basic properties of quantum computer to benchmark is the number of qubits present in the quantum computer, effectively certifying the dimension of the underlying Hilbert space. Several methods for certifying the dimension of quantum systems were previously developed \cite{brunner2008testing,gallego2010device,brunner2013dimension,bowles2014certifying,tavakoli2015quantum,spee2020genuine}, but these share a common flaw: one can either pass the test by having quantum system of sufficient size, but one can pass the test also by having classical system of (usually larger) sufficient size. In other words, these test would be passed also by classical computers instead of quantum computers, thus opening a loophole in benchmarking the number of qubits within a quantum computer.

In this paper we close this loophole by constructing contextuality inequalities with dimension-dependent maximal violation. Since contextuality inequalities hold, under some assumptions, for classical systems, any violation proves that we are accessing a quantum system; observing a sufficiently high violation then proves that we are accessing a quantum system of at least certain dimension. Since our approach is based on linear contextuality inequalities, it is by construction noise-robust and thus experimentally applicable. Following the theoretical development of our dimension-dependent contextuality inequality, we demonstrate its usefulness and validate our findings experimentally through an implementation using photons. We observe a violation of a basic contextuality inequality based on the CHSH inequality already proposed in \cite{plavala2024contextuality}, and we observe violations of the qutrit bound of the \cglmp contextuality inequality in two separate experiments.

\paragraph{Theoretic background}%
The experimental setups we will consider in this work can be described in terms of preparations and measurements. Within the framework of quantum theory, preparations are described by density matrices and measurements by positive operator valued measures (POVMs). Let $\Ha$ be a finite-dimensional complex Hilbert space, let $\bound(\Ha)$ denote the set of linear operators on $\Ha$, we will use $\I$ to denote the identity operator, and for $A \in \bound(\Ha)$ we will use $A \geq 0$ to denote that $A$ is positive semi-definite and $\Tr(A)$ the trace of $A$. Then a density matrix $\rho$ is a positive semi-definite operator with unit trace, $\rho \geq 0$ and $\Tr(\rho) = 1$, and POVM $\{M_a\}$ is a set of positive semi-definite operators summing to the identity, that is, $M_a \geq 0$ and $\sum_a M_a = \I$. Given a density matrix $\rho$ and POVM $\{M_a\}$ the probability distribution that we obtain by measuring $\rho$ with $\{M_a\}$ is given according to the Born rule as $p(a) = \Tr(\rho M_a)$.

The question we aim to answer is the following: given access to the experimental data, we want to know whether the experimental data can be explained using a classical model, the so-called hidden variable model, and if not, then what is the minimal dimension of the Hilbert space $\Ha_{\min}$ such that some preparations and POVMs yield the same experimental data. We will introduce preparation and measurement (P\&M) contextuality \cite{spekkens2005contextuality,schmid2021characterization,schmid2024addressing} to answer the first question, and we will show that the second question can be answered using the very same methods.
\begin{definition}
Let $\state$ be the set of allowed preparations and let $\effect$ be the set of allowed measurements, then $(\state, \effect)$ is P\&M noncontextual if there are sets of operators $\{N_\lambda\}_{\lambda \in \Lambda}$ and $\{\omega_\lambda\}_{\lambda \in \Lambda}$ such that for all $\sigma \in \state$, $M \in \effect$ we have
\begin{equation} \label{eq:PMcontextualityTr}
\Tr(\sigma M) = \sum_\lambda \Tr(\sigma N_\lambda) \Tr(\omega_\lambda M)
\end{equation}
and
\begin{align}
&\Tr(\sigma N_\lambda) \geq 0,
&& \sum_\lambda \Tr(\sigma N_\lambda) = 1, \\
&\Tr(\omega_\lambda M) \geq 0,
&& \Tr(\omega_\lambda) = 1.
\end{align}
\end{definition}
The P\&M contextuality captures whether the allowed preparation $\sigma$ can be explained in terms of a probability distribution $p(\lambda|\sigma) = \Tr(\sigma N_\lambda)$ and, at the same time, whether the allowed measurement outcome $M$ can be explained as response functions $p(M|\lambda) = \Tr(\omega_\lambda M)$ in such way that they correctly reconstruct the predictions of quantum theory. In this way one obtains a hidden variable model in terms of the classical variable $\lambda$ that is well defined and operationally motivated. A detailed discussion on
the underlying assumptions can be found in \cite{tavakoli2020incompatibility, guehne2023incompatibility}
and the relation to other notions of contextuality is discussed in \cite{budroni2022kochen}.

In this work we will consider set of subnormalized preparations $\{\sigma_{a|x}\}$ such that
\begin{equation} \label{eq:assemblageConstraint}
\sum_a \sigma_{a|x} = \sigma_*, \forall x \quad \text{where} \quad \Tr(\sigma_*) = 1.
\end{equation}
For a fixed value of $x$, $\sigma_{a|x}$ can be seen as a series of sub-normalized preparations adding up to a quantum state. We will also consider a set of POVMs $\{M_{b|y}\}_{b,y}$ and $\sum_b M_{b|y} = \I$ for all $y$. It is straightforward to extend the definition of P\&M contextuality to include sub-normalized preparations as they are just positive multiples of allowed preparations. Using the recently developed methods \cite{plavala2024contextuality,wright2023invertible} we obtain the following:
\begin{theorem} \label{thm:PMthenBell}
Consider the sub-normalized preparations $\{\sigma_{a|x}\}_{a,x}$ obeying \eqref{eq:assemblageConstraint} and a set of POVMs $\{M_{b|y}\}_{b,y}$. Let $p(ab|xy) = \Tr( \sigma_{a|x} M_{b|y})$ denote the conditional probability one obtains from the experiment, conditioned on the choice of $x$ and $y$. If $(\{\sigma_{a|x}\}_{a,x}, \{M_{b|y}\}_{b,y})$ is P\&M noncontextual, then
\begin{equation} \label{eq:BellLocal}
p(ab|xy) = \sum_\lambda p(\lambda) p(a|x,\lambda) p(b|y,\lambda)
\end{equation}
where $p(\lambda) \geq 0$, $\sum_\lambda p(\lambda) = 1$, $p(a|x,\lambda) \geq 0$, $\sum_a p(a|x,\lambda) = 1$, $p(b|y,\lambda) \geq 0$, $\sum_b p(b|y,\lambda) = 1$.
\end{theorem}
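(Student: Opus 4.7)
The plan is to take the defining equation of P\&M noncontextuality and unpack it directly into the local hidden variable form, by reading off the three conditional distributions $p(\lambda)$, $p(a|x,\lambda)$, $p(b|y,\lambda)$ from the operators $N_\lambda$, $\omega_\lambda$. Since the noncontextuality assumption must in particular hold on the subnormalized preparations $\sigma_{a|x}$ (which are positive multiples of elements of $\state$) and on each POVM element $M_{b|y}$, applying \eqref{eq:PMcontextualityTr} gives immediately
\begin{equation}
p(ab|xy) = \Tr(\sigma_{a|x} M_{b|y}) = \sum_\lambda \Tr(\sigma_{a|x} N_\lambda)\,\Tr(\omega_\lambda M_{b|y}),
\end{equation}
so the whole game is to split each term in the sum into a product $p(\lambda) p(a|x,\lambda) p(b|y,\lambda)$.

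First I would use the global normalization constraint \eqref{eq:assemblageConstraint}: since $\sigma_* = \sum_a \sigma_{a|x}$ is a normalized state that is independent of $x$, the natural choice is $p(\lambda) := \Tr(\sigma_* N_\lambda)$, which is nonnegative and sums to one by the P\&M noncontextuality conditions applied to $\sigma_*$. Then I would set $p(b|y,\lambda) := \Tr(\omega_\lambda M_{b|y})$, which is nonnegative by assumption and satisfies $\sum_b p(b|y,\lambda) = \Tr(\omega_\lambda \I) = 1$ because the POVM elements sum to the identity and $\Tr(\omega_\lambda)=1$. Finally, whenever $p(\lambda) > 0$ I would define $p(a|x,\lambda) := \Tr(\sigma_{a|x} N_\lambda)/\Tr(\sigma_* N_\lambda)$; nonnegativity is inherited from the noncontextuality condition for the subnormalized preparations, and $\sum_a p(a|x,\lambda)=1$ follows from $\sum_a \sigma_{a|x} = \sigma_*$. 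A short calculation then shows $p(\lambda)p(a|x,\lambda)p(b|y,\lambda) = \Tr(\sigma_{a|x}N_\lambda)\Tr(\omega_\lambda M_{b|y})$, and summing over $\lambda$ recovers \eqref{eq:BellLocal}.

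The only genuine subtlety, which I expect to be the main (and only) obstacle, is dealing with the degenerate $\lambda$ where $p(\lambda) = \Tr(\sigma_* N_\lambda) = 0$, so that the quotient defining $p(a|x,\lambda)$ is $0/0$. Here I would argue that $0 = \Tr(\sigma_* N_\lambda) = \sum_a \Tr(\sigma_{a|x} N_\lambda)$ is a sum of nonnegative numbers, forcing $\Tr(\sigma_{a|x}N_\lambda) = 0$ for every $a$; consequently such $\lambda$ contribute nothing to the decomposition and one may harmlessly set $p(a|x,\lambda)$ to be any normalized distribution (e.g.\ uniform) to keep the definition well-posed. With this caveat taken care of, the verification of all the nonnegativity and normalization conditions is immediate, and the theorem follows.
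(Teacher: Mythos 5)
Your proposal is correct and follows essentially the same route as the paper's proof: the paper defines $p(\lambda) = \sum_a \Tr(\sigma_{a|x} N_\lambda)$ (equal to your $\Tr(\sigma_* N_\lambda)$ by \eqref{eq:assemblageConstraint}), $p(a|x,\lambda) = \Tr(\sigma_{a|x} N_\lambda)/p(\lambda)$, and $p(b|y,\lambda) = \Tr(\omega_\lambda M_{b|y})$, exactly as you do. Your explicit handling of the degenerate case $p(\lambda)=0$ is a small point of extra care that the paper's proof silently omits.
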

The result is obtained by taking $p(\lambda) p(a|x,\lambda) = \Tr(\sigma_{a|x} N_\lambda)$, and $p(b|y,\lambda) = \Tr(\omega_\lambda M_{b|y})$, the complete proof is relegated to the appendix. Thm.~\ref{thm:PMthenBell} shows that we can use Bell inequalities \cite{bell1964einstein,brunner2014bell,rosset2014classifying} to construct contextuality inequalities: if $p(ab|xy)$ is observed as no-signaling behavior in a Bell experiment, then it is Bell local exactly if it satisfies \eqref{eq:BellLocal}, and if such $p(ab|xy)$ does not satisfy \eqref{eq:BellLocal}, then there is a Bell inequality that it violates. 

Moreover, we can further show that one can also transport the respective dimension-dependent bounds. To see this, let $W$ be a Bell inequality and let $W([p(ab|xy)])$ denote the value of the Bell inequality on the no-signaling behavior $p(ab|xy)$, we will assume that $W([p(ab|xy)]) \leq 0$ if $p(ab|xy)$ is Bell local, i.e., if it satisfies \eqref{eq:BellLocal}, we can assume this without loss of generality as we can always rescale and offset any Bell inequality. Let us assume that there is a dimension-dependent bound $X_d$ for $W$, that is, there is a number $X_d$ such that for $\dim(\Ha_d) = d$, for any state $\rho \in \bound(\Ha_d \otimes \Ha_d)$ and for any sets of POVMs $R_{a|x} \in \bound(\Ha_d)$, $S_{b|y}  \in \bound(\Ha_d)$ we have $W([\Tr(\rho(R_{a|x} \otimes S_{b|y}))]) \leq X_d$. We can now show that the same dimension-dependent bound will hold when $W$ will be considered as contextuality inequality.
\begin{theorem} \label{thm:PMdim}
Let $W$ be a Bell inequality with dimension-dependent bound $X_d$. Let $\dim(\Ha_d) = d$ and let $\sigma_{a|x} \in \bound(\Ha_d)$ be sub-normalized preparations obeying \eqref{eq:assemblageConstraint} and let $M_{b|y} \in \bound(\Ha_d)$ be a set of POVMs. Then
\begin{equation}
W([\Tr(\sigma_{a|x} M_{b|y}]) \leq X_d.
\end{equation}
\end{theorem}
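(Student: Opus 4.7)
The plan is to reduce Theorem~\ref{thm:PMdim} to the assumed dimension-dependent Bell bound by constructing a bipartite quantum realization in $\Ha_d \otimes \Ha_d$ that exactly reproduces the observed statistics $p(ab|xy) = \Tr(\sigma_{a|x} M_{b|y})$. Since the value of $W$ depends only on the behavior $p(ab|xy)$, any Bell realization yielding these statistics automatically gives the same value of $W$, so the bound $X_d$ transfers immediately.

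First I would let $\sigma_* = \sum_a \sigma_{a|x}$, which by \eqref{eq:assemblageConstraint} is a well-defined density operator on $\Ha_d$ independent of $x$, and pick any purification $\ket{\psi} \in \Ha_d \otimes \Ha_d$ of $\sigma_*$ with $\Tr_1(\ketbra{\psi}) = \sigma_*$. The rank of $\sigma_*$ is at most $d$, so a minimal purification fits into an ancilla of dimension $\leq d$ which can be embedded in $\Ha_d$. Next, for each setting $x$ the subnormalized decomposition $\sigma_* = \sum_a \sigma_{a|x}$ is an ensemble for $\sigma_*$, so by the Hughston-Jozsa-Wootters theorem there exists a POVM $\{R_{a|x}\}$ on the ancilla such that
\[
\Tr_1\bigl((R_{a|x} \otimes \I)\ketbra{\psi}\bigr) = \sigma_{a|x}
\]
for all $a, x$. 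Finally I would take $S_{b|y} = M_{b|y}$ on the second factor.

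A direct computation using the partial trace then yields
\[
\Tr\bigl(\ketbra{\psi}\,(R_{a|x} \otimes S_{b|y})\bigr) = \Tr\bigl(M_{b|y}\,\Tr_1\bigl((R_{a|x}\otimes\I)\ketbra{\psi}\bigr)\bigr) = \Tr(\sigma_{a|x} M_{b|y}),
\]
so the Bell experiment with state $\rho = \ketbra{\psi}$ and POVMs $R_{a|x}, S_{b|y}$ reproduces the contextuality statistics exactly. Plugging this into the hypothesized dimension-dependent bound gives $W([\Tr(\sigma_{a|x} M_{b|y})]) = W([\Tr(\rho(R_{a|x} \otimes S_{b|y}))]) \leq X_d$, as claimed.

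The only real obstacle is a dimensional bookkeeping point: one must check that the ancilla in the purification can indeed be taken inside $\Ha_d$. This is automatic because a minimal purification of $\sigma_*$ requires only $\mathrm{rank}(\sigma_*) \leq d$ dimensions on the purifying side, and one can pad up to exactly $d$ without altering any trace. Apart from that, the proof is essentially the standard Hughston-Jozsa-Wootters steering construction, used here in reverse to translate a P\&M contextuality scenario into a Bell scenario of the same Hilbert-space dimension.
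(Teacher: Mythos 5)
Your proof is correct and follows essentially the same route as the paper: both realize the assemblage $\{\sigma_{a|x}\}$ by steering from a bipartite state in $\Ha_d \otimes \Ha_d$ via the (G)HJW theorem, keep $M_{b|y}$ as the second party's measurements, and then invoke the assumed dimension-dependent Bell bound. Your version merely spells out the purification and the rank/dimension bookkeeping that the paper leaves implicit in its citation of the GHJW theorem.
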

The proof follows by using the GHJW theorem \cite{schrodinger1935discussion,hadjisavvas1981properties,gisin1989stochastic,hughston1993complete,sainz2015postquantum} which states that a bipartite no-signaling assemblage can be prepared by measuring an entangled state, i.e., there are a quantum state $\rho \in \bound(\Ha_d \otimes \Ha_d)$ and set of POVMs $N_{a|x} \in \bound(\Ha_d)$ such that $\sigma_{a|x} = \Tr_A (\rho (N_{a|x} \otimes \I))$. The complete proof is relegated to the appendix.

\paragraph{CHSH contextuality inequality}%
The case of CHSH inequality \cite{clauser1969proposed,clauser1970proposed} was already investigated in \cite{plavala2024contextuality}, and we just shortly reviewed the results. For $a,b,x,y \in \{0,1\}$ and for $\sigma_{a|x}$ satisfying \eqref{eq:assemblageConstraint} we have that if $(\{\sigma_{a|x}\}_{a,x}, \{M_{b|y}\}_{b,y})$ is P\&M noncontextual, then according to Thm.~\ref{thm:PMthenBell} $p(ab|xy) = \Tr(\sigma_{a|x} M_{b|y})$ satisfies the CHSH inequality, which coincides with \cite[Proposition 5.]{plavala2024contextuality}.

\paragraph{\cglmp contextuality inequality}%
We will now apply the results of Thms.~\ref{thm:PMthenBell} and \ref{thm:PMdim} to the \cglmp Bell inequality \cite{collins2002bell,collins2004relevant}, and we will use the results of \cite{moroder2013device,cai2016new} to obtain dimension bounds as well. Let $a,b \in \{0, 1, 2, 3\}$ and $x,y \in \{0, 1\}$, the the \cglmp Bell inequality reads
\begin{equation}
\begin{split}
I_4([p(ab|xy)] &= p(a \leq b|00) + p(a \geq b|01) \\
&+ p(a \geq b|10) - p(a \geq b|11) - 2 \leq 0,
\end{split}
\end{equation}
where $p(a \leq b|00)$ is the sum of all of the terms $p(ab|00)$ where $a \leq b$, and analogically for the other terms. The qubit bound $X_2$ for the \cglmp is $X_2 = \frac{1}{\sqrt{2}} - \frac{1}{2} \approx 0.2071$ \cite{moroder2013device} and $X_3 = 0.315$ \cite{cai2016new}. The maximal violation of $0.36476$ is given by the following preparations and measurements \cite{cai2016new}: let $\dim(\Ha) = 4$, then
\begin{align}
&\sigma_{a|x} = \dfrac{\ketbra{\varphi_{a|x}}}{2}, \,
\ket{\varphi_{a|x}} = \dfrac{\sum_{k=0}^3 \e^{i(\frac{\pi}{2} a - \theta_x) k} s_k \ket{k}}{2} \label{eq:CGLMP4optimalAssemblage} \\
&M_{b|y} = \ketbra{\psi_{b|y}}, \;
\ket{\psi_{b|y}} = \dfrac{\sum_{k=0}^3 \e^{i(\frac{\pi}{2} b + \phi_y) k} \ket{k}}{2} \label{eq:CGLMP4optimalMeasurement}
\end{align}
where $\ket{k}$ is the standard computational basis, $\theta_0 = -\frac{\pi}{8}$, $\theta_1 = \frac{\pi}{8}$, $s_0 = s_3 = \cos(\vartheta)$, $s_1 = s_2 = \sin(\vartheta)$, $\vartheta =\frac{1}{2} \arcsin(\frac{1+2\sin(\pi / 8))}{\sqrt{6\sin^2(\pi / 8) + 4\sin(\pi / 8) + 1}})$, and $\phi_0 = 0$, $\phi_1 = \frac{\pi}{4}$.

In the experimental implementation with photos, we will implement the four level system needed to measure the maximal violation by using the polarization degree of freedom of the photon described by the Hilbert space $\Ha_{\pol}$, and by using an interferometer with two possible paths we get the which-path degree of freedom, which is described by the Hilbert space $\Ha_{\pth}$. Since $\dim(\Ha_\pth) = \dim(\Ha_\pol) = 2$, the Hilbert space describing the state of the photon $\Ha = \Ha_\pol \otimes \Ha_\pth$ has $\dim(\Ha) = 4$ as needed. As one can check, the preparations in \eqref{eq:CGLMP4optimalAssemblage} are entangled when we consider the factorization $\Ha = \Ha_\pol \otimes \Ha_\pth$ and thus they may be challenging to implement in the described interferometric setup. A more "experimental-friendly way" of doing the experiment would be with only separable preparations. For this reason we will also investigate what is the maximal violation of the \cglmp contextuality inequality with only separable preparations. We will use a see-saw algorithm: if we fix the measurements, optimizing the violation of the inequality over the separable preparations can be done in a single semi-definite program \cite{boyd2004convex}, here we are using that, in this case, separability coincides with positive partial transpose \cite{horodecki1996separability,peres1996separability,fawzi2021set}. If we fix separable preparations and optimize over measurements we again get a single semi-definite program. We can thus start with random measurements and alternately optimize over preparations and measurements to obtain the highest possible violation. The details of the algorithm are in the appendix, and the code and generated data are available \cite{plavala_github}. The best found violation was $0.33609$, which is obtained by using the same measurements as in \eqref{eq:CGLMP4optimalMeasurement} and the following preparations:
\begin{equation} \label{eq:CGLMP4separableAssemblage}
\begin{split}
&\sigma_{a|x} = \dfrac{1}{4} \ketbra{u_{a|x}} \otimes \ketbra{v_{a|x}}, \\
&\ket{u_{a|x}} = \dfrac{\e^{i((-1)^x \frac{\pi}{8} + \frac{\pi}{2} a)} \ket{0} + \e^{i((-1)^x \frac{3\pi}{8} - \frac{\pi}{2} a)} \ket{1}}{\sqrt{2}}, \\
&\ket{v_{a|x}} = \dfrac{\e^{-i((-1)^x \frac{\pi}{8} + \frac{\pi}{2} a)} \ket{0} + \ket{1}}{\sqrt{2}}.
\end{split}
\end{equation}

\begin{figure*}
\includegraphics[width=0.7\linewidth]{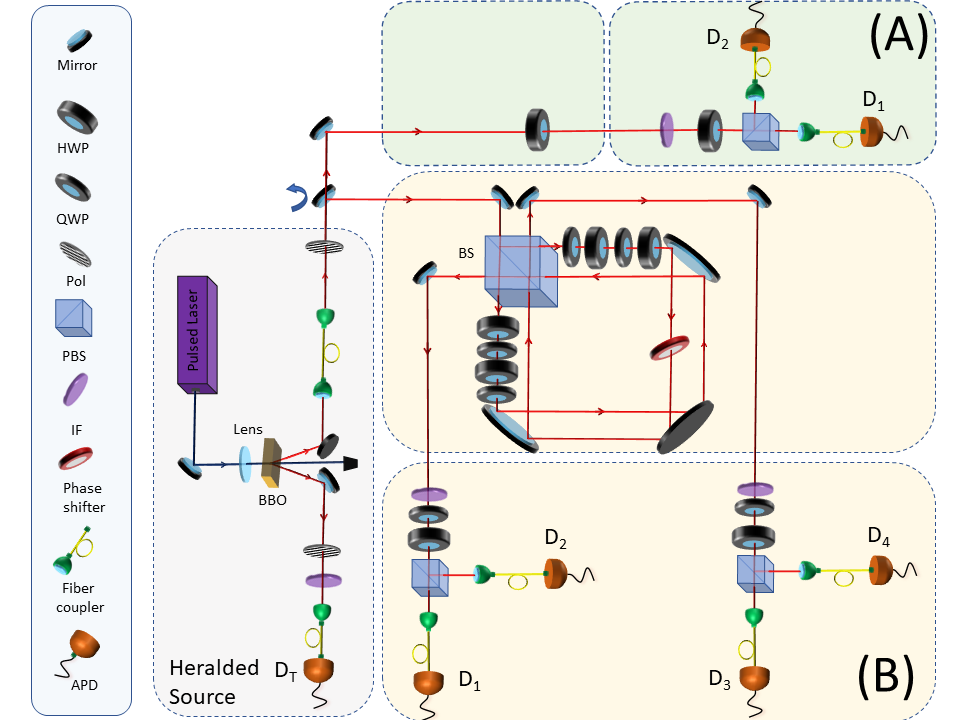}
\caption[Experimental setup]{{\bf Experimental setup with heralded source.} Initial state preparation using photons from a heralded source and a polarizer. The movable mirror either reflects photons to the experimental setup (A) or (B). The CHSH experiment (A) is realized by using HWP for state preparation and HWP together with PBS for measurement. The output modes  of  PBS  are coupled to single mode fiber  and  connected to single photon detectors. The prepare and measure \cglmp experiment (B) is realized using a Sagnac configuration. A 50/50 non-polarizing beam splitter (BS) divides photon path in two followed by a combination of two HWP and two QWP together with phase shifter (PP). The measurement part consists of two measurement stations at the two output ports of the BS. Both measurement stations contain a combination of one HWP and one QWP succeeded by a PBS and detected by single mode fiber coupled to single photon detector. See text for details.}\label{setup}
\end{figure*}

\paragraph{Experimental realization}%
Here we will present the experimental implementation of the CHSH and \cglmp contextuality tests. We are using an optical setup where the physical system is defined by single photons representing either a 2-level (Fig.~\ref{setup} (A)) or 4-level (Fig.~\ref{setup} (B)) quantum system. For the CHSH test the 2-level system is encoded in polarization with basis state of $\ket{H}$ and $\ket{V}$, where ($H$)/($V$) is horizontal/vertical polarization photonic mode. We are using polarization and path encoding for the \cglmp test. The four basis states are $\ket{0} = \ket{H,a}$, $\ket{1} = \ket{V,a}$, $\ket{2} = \ket{H,b}$ and $\ket{3} = \ket{V,b}$, where ($H$)/($V$) is horizontal/vertical polarization photonic mode and (a)/(b) is spatial photonic mode of single photons. For both experimental tests, the single photons are generated by heralded single-photon source utilizing a spontaneous parametric down-conversion (SPDC) process where a 2mm non-linear crystal (BBO) is pumped by a femtosecond pulsed laser at an operating wavelength of 390nm. The non-linear process generates twin photons, commonly known as signal and idler photons, which are optically filtered and coupled into single-mode fibers giving a well-defined spectral and spatial characteristic. The idler photon is used as a trigger and is detected by a single detector $D_T$. The signal photons are coupled to experimental setup (A)/(B) as seen in Fig.~\ref{setup}. 

\paragraph{CHSH experimental test}%
The signal photons are initially prepared in $\ket{H}$ state using a polarizer oriented horizontally. The state is prepared by setting the orientation angle of  HWP, see details in Table~\ref{A1} in the appendix. The measurement is performed by choosing the orientation of an HWP followed by  PBS; see details of measurement settings in Table~\ref{A2} in the appendix. The output modes of  PBS are coupled to single-mode fibers and then connected to  Single-photon detectors $D_1$ and $D_2$. The detector output signals are coupled to a multichannel coincidence unit. For each preparation and measurement setting, the measurement was performed for 30 minutes at a rate of 1,500 2-photon coincidences per second. The experimental protocol allowed the classical limit to be violated in accordance with the theory: $2 \leq 2.8021\pm 0.0098 \leq 2/\sqrt{2}$, see details of each term in the inequality in the appendix table~\ref{A6}.

\paragraph{\cglmp experimental test}%
Signal photons are first prepared in $\ket{H}$ state using a polarizer oriented horizontally. We then use a Sagnac configuration in following manner: a 50/50 non-polarizing beam splitter (BS) splits the photon path in two, each containing a combination of two oriented half (HWP) and two quarter-wave plates (QWP), followed by a phase shifter (PP) in one of the paths resulting in the following preparation state: $\frac{1}{\sqrt{2}} (\cos{(2\alpha)} \ket{H,1} + \e^{i\phi_1} \sin{(2\alpha)} \ket{V,1} + \e^{iPP} (\cos{(2\beta)} \ket{H,2} + \e^{i\phi_2} \sin{(2\beta)} \ket{V,2}))$, where $\phi_1$ and $\phi_2$ are the phases acquired in each path relative to horizontal part of the state. Both paths are then again recombined in the BS which before sent to the measuring part of the experiment.

The measurement part consists of two measurement stations at the two output ports of the BS. Both measurement stations contain a combination of HWP and QWP succeeded by a polarizing beam splitter (PBS) and detected by single mode fiber which is coupled to single photon detectors $D_1$, $D_2$, $D_3$ and $D_4$. The detector output signals are connected to a multichannel coincidence unit. Each setting in the \cglmp experiment were conducted with a coincidence counting rate of approximately 30,000 per second for 60 seconds. All the preparation and measurement settings can be found in the Tables \ref{A3}, \ref{A4}, and \ref{A5} in the appendix.

Two \cglmp experiments were performed. First one experimental-friendly using separable states which required fewer rotation setting in preparation stage and the second one to achieve maximal violation. The following protocols were used: first, the separable preparations in \eqref{eq:CGLMP4separableAssemblage} and measurements in \eqref{eq:CGLMP4optimalMeasurement} were used to violate the qutrit bound: $0.315 < 0.3292 \pm 0.008$ (to be compared with the theoretical value of $0.3361$). Then the entangled preparations in \eqref{eq:CGLMP4optimalAssemblage} and measurements \eqref{eq:CGLMP4optimalMeasurement} were used to obtain a higher violation of $0.3631 \pm 0.01$ (to be compared with the theoretical value of $0.3648$). The experimental details of each term in the \cglmp inequality can be found in the Tables~\ref{A7} and \ref{A8} in the Appendix.

\paragraph{Errors}%
The errors in the experiment result are coming from imperfection in state preparation and measurement. On the preparation side, deviation from a perfect state of visibility is due to polarization deviation and imperfections in the spatial overlap in the Sagnac loop, together with rotation errors of the wave plates. At the measurement side, errors arise due to deviations in wave plate rotations and statistical deviations. We estimated experimental errors using Monte Carlo simulation, where the inputs are carefully analyzed by characterizing individual components. The motorized wave plates in experiment (A) have a precision of repeatability of 0.1 degree, which we use as the standard deviation of the Gaussian distribution from which samples are fed into the Monte Carlo simulation. The non-motorized wave plates and phase plate used in experiment (B) are set to have an upper experimental error of 0.5 degree. Finally, we have also considered statistical errors of photon counting, modeled by poissonian statistics. However, statistical errors are due to large experimental sample collection, up to two orders smaller than the errors arising from the Monte Carlo simulation.

\paragraph{Conclusion}%
We have demonstrated that dimension-dependent contextuality inequalities can be used for experimental benchmarks of the number of qubits accessible in a quantum system. Our results made three primary contributions: first, we provided a formal framework for converting Bell inequalities to contextuality inequalities with dimension-dependent bounds; second, we introduced a concrete example based on the \cglmp inequality; and third, we demonstrated the applicability of our theoretical results through a photonic experiment. Moreover the developed methods have significant advantage when compared to previous similar results \cite{brunner2008testing,gallego2010device,brunner2013dimension,bowles2014certifying,tavakoli2015quantum,spee2020genuine} due to the effect that the violation of a contextuality inequality itself can be seen as a signature of non-classicality. Hence the tests we have developed cannot easily be cheated by using larger classical systems and thus they certify quantum degrees of freedom. This means that if our methods would be used to verify the number of qubits in a quantum computer, they would also immediately verify that the device in fact has some non-classical features, indicating that it may work according to the laws of quantum theory. Thus our results not only deepen the understanding of contextuality in quantum systems but also demonstrate its practical usefulness for benchmarking quantum computers.

There are two natural continuations of this research direction: our case of \cglmp inequality can certify the access to two qubits, but current day quantum computers boast up to over 1000 qubits \cite{wilkins2024record, castelvecchi2023ibm}. In order to use our methods to test similar quantum computers, one would need to identify relevant Bell inequalities or contextuality inequalities and prove the dimension bounds. This will be a hard task as already for circa 10 qubits one cannot simply use the same numerical methods as we have used. Another natural continuation of the presented research is to look whether contextuality can be used to certify other application-oriented properties of quantum systems. For example, the connection between contextuality and no-broadcasting was recently observed \cite{jokinen2024no}, hinting at the potential uses of contextuality to benchmark quantum key distribution \cite{wolf2021quantum,rusca2024quantum}.

\begin{acknowledgments}
\paragraph{Acknowledgements}%
This work was supported by the Knut and Alice Wallenberg Foundation through the Wallenberg Centre for Quantum Technology (WACQT) and the Swedish Research Council (VR). MP and OG acknowledge support from the Deutsche Forschungsgemeinschaft (DFG, German Research Foundation, project numbers 447948357 and 440958198), the Sino-German Center for Research Promotion (Project M-0294), the German Ministry of Education and Research (Project QuKuK, BMBF Grant No. 16KIS1618K), the DAAD, and the Alexander von Humboldt Foundation.
\end{acknowledgments}

\bibliography{citations}

\onecolumngrid
\appendix

\section{Proof of Theorem~\ref{thm:PMthenBell}}
\begin{theorem*}
Consider the subnormalized preparations $\{\sigma_{a|x}\}_{a,x}$ obeying
\begin{equation}
\sum_a \sigma_{a|x} = \sigma_*, \forall x
\end{equation}
where $\Tr(\sigma_*) = 1$ and set of POVMs $\{M_{b|y}\}_{b,y}$. Let $p(ab|xy) = \Tr( \sigma_{a|x} M_{b|y})$ to denote the conditional probability one obtains from the experiment, conditioned on the choice of $x$ and $y$. If $(\{\sigma_{a|x}\}_{a,x}, \{M_{b|y}\}_{b,y})$ is P\&M noncontextual, then
\begin{equation}
p(ab|xy) = \sum_\lambda p(\lambda) p(a|x,\lambda) p(b|y,\lambda)
\end{equation}
where $p(\lambda) \geq 0$, $\sum_\lambda p(\lambda) = 1$, $p(a|x,\lambda) \geq 0$, $\sum_a p(a|x,\lambda) = 1$, $p(b|y,\lambda) \geq 0$, $\sum_b p(b|y,\lambda) = 1$.
\end{theorem*}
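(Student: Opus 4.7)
The plan is to unpack the P\&M noncontextuality assumption and identify the hidden variable $\lambda$ of the Bell-local model with the $\lambda$ indexing the operators $\{N_\lambda\}, \{\omega_\lambda\}$ that witness noncontextuality, following the hint given after the theorem. Concretely, I would set
\begin{equation*}
p(\lambda) = \Tr(\sigma_* N_\lambda), \qquad p(a|x,\lambda) = \dfrac{\Tr(\sigma_{a|x} N_\lambda)}{p(\lambda)}, \qquad p(b|y,\lambda) = \Tr(\omega_\lambda M_{b|y}),
\end{equation*}
with an arbitrary uniform choice for $p(a|x,\lambda)$ when $p(\lambda) = 0$ (the corresponding term contributes nothing).

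First I would make precise the extension of the P\&M noncontextuality definition to subnormalized preparations: since each $\sigma_{a|x}$ is a positive multiple of a normalized preparation, the decomposition \eqref{eq:PMcontextualityTr} applies to $\sigma_{a|x}$ as well, along with the inequality $\Tr(\sigma_{a|x} N_\lambda) \geq 0$. Substituting into $p(ab|xy) = \Tr(\sigma_{a|x} M_{b|y})$ then yields
\begin{equation*}
p(ab|xy) = \sum_\lambda \Tr(\sigma_{a|x} N_\lambda)\, \Tr(\omega_\lambda M_{b|y}) = \sum_\lambda p(\lambda)\, p(a|x,\lambda)\, p(b|y,\lambda),
\end{equation*}
which is the desired Bell-local form.

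The remaining work is to verify the normalization and non-negativity conditions. Non-negativity of $p(\lambda)$, $p(a|x,\lambda)$ and $p(b|y,\lambda)$ follows directly from the corresponding inequalities in the definition of P\&M noncontextuality. For $\sum_\lambda p(\lambda) = 1$ I would apply $\sum_\lambda \Tr(\sigma N_\lambda) = 1$ to $\sigma = \sigma_*$. The key algebraic step is $\sum_a p(a|x,\lambda) = 1$, which is exactly where the assemblage constraint \eqref{eq:assemblageConstraint} is used: summing the numerator over $a$ gives $\Tr\!\bigl(\sum_a \sigma_{a|x}\, N_\lambda\bigr) = \Tr(\sigma_* N_\lambda) = p(\lambda)$, independent of $x$ (so there is no signaling through $\lambda$). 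Finally, $\sum_b p(b|y,\lambda) = 1$ follows from $\sum_b M_{b|y} = \I$ and $\Tr(\omega_\lambda) = 1$.

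None of the steps is really an obstacle; the proof is essentially algebra once the correct identifications are made. The only subtle point to emphasize is that the assemblage constraint $\sum_a \sigma_{a|x} = \sigma_*$ plays a double role: it ensures that $p(\lambda)$ is well defined (independent of $x$) and it is precisely what makes $p(a|x,\lambda)$ normalize to $1$ in $a$ for every $x$. Without this no-signaling-like condition on the preparations, one would not obtain a valid hidden-variable model of the Bell form.
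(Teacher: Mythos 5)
Your proof is correct and follows essentially the same route as the paper's: the paper sets $p(\lambda) = \sum_a \Tr(\sigma_{a|x} N_\lambda)$, which by the assemblage constraint equals your $\Tr(\sigma_* N_\lambda)$, and the remaining identifications and normalization checks coincide. Your explicit handling of the $p(\lambda)=0$ case is a minor refinement the paper omits.
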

\begin{proof}
Since $(\{\sigma_{a|x}\}_{a,x}, \{M_{b|y}\}_{b,y})$ is P\&M noncontextual, we have
\begin{equation}
\Tr( \sigma_{a|x} M_{b|y}) = \sum_\lambda \Tr(\sigma_{a|x} N_\lambda) \Tr(\omega_\lambda M_{b|y})
\end{equation}
We will denote
\begin{align}
p(\lambda) &= \sum_a \Tr(\sigma_{a|x} N_\lambda), \\
p(a|x, \lambda) &= \dfrac{\Tr(\sigma_{a|x} N_\lambda)}{p(\lambda)}, \\
p(b|y, \lambda) &= \Tr(\omega_\lambda M_{b|y}).
\end{align}
We have $p(\lambda) \geq 0$, $p(a|x, \lambda) \geq 0$, $p(b|y, \lambda) \geq 0$ as a result of the definition of P\&M noncontextuality. Moreover we have
\begin{equation}
\sum_\lambda p(\lambda) = \sum_{a, \lambda} \Tr(\sigma_{a|x} N_\lambda) = 1
\end{equation}
and so $p(\lambda)$ is a probability distribution. Also
\begin{align}
\sum_a p(a|x, \lambda) &= \dfrac{1}{p(\lambda)} \sum_a \Tr(\sigma_{a|x} N_\lambda) = 1, \\
\sum_b p(b|y, \lambda) &= \sum_b \Tr(\omega_\lambda M_{b|y}) = \Tr(\omega_\lambda) = 1,
\end{align}
and so $p(a|x, \lambda)$, $p(b|y, \lambda)$ are conditional probability distributions. We then have
\begin{equation}
\Tr( \sigma_{a|x} M_{b|y}) = \sum_\lambda \Tr(\sigma_{a|x} N_\lambda) \Tr(\omega_\lambda M_{b|y}) = \sum_\lambda p(\lambda) p(a|x, \lambda) p(b|y, \lambda).
\end{equation}
\end{proof}

\section{Proof of Theorem~\ref{thm:PMdim}}
\begin{theorem*}
Let $W$ be a Bell inequality with dimension-dependent bound $X_d$. Let $\dim(\Ha_d) = d$ and let $\sigma_{a|x} \in \bound(\Ha_d)$ be subnormalized preparations obeying $\sum_a \sigma_{a|x} = \sigma_*, \forall$ where $\Tr(\sigma_*) = 1.$ and let $M_{b|y} \in \bound(\Ha_d)$ be a set of POVMs. Then
\begin{equation}
W([\Tr(\sigma_{a|x} M_{b|y})]) \leq X_d.
\end{equation}
\end{theorem*}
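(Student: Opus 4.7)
The plan is to use the GHJW theorem to realize the subnormalized assemblage $\{\sigma_{a|x}\}$ as arising from local measurements on one half of a bipartite $d \otimes d$ entangled state, and then invoke the hypothesized dimension-dependent Bell bound $X_d$ on the resulting quantum correlations.

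First, I would observe that the constraint $\sum_a \sigma_{a|x} = \sigma_*$ for all $x$ is precisely the no-signaling condition on an assemblage, and that since every $\sigma_{a|x}$ lies in $\bound(\Ha_d)$, the marginal $\sigma_*$ has rank at most $d$. The GHJW theorem then guarantees the existence of a purification $|\psi\rangle\langle\psi| \in \bound(\Ha_d \otimes \Ha_d)$ of $\sigma_*$, together with a family of POVMs $\{N_{a|x}\}_{a}$ on $\Ha_d$ (one POVM per setting $x$), such that
\begin{equation}
\sigma_{a|x} = \Tr_A\bigl(|\psi\rangle\langle\psi|(N_{a|x} \otimes \I)\bigr)
\end{equation}
for every $a$ and $x$. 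The crucial observation here is that a \emph{single} purification $|\psi\rangle$ works uniformly for all $x$, because all the decompositions of the assemblage share the same marginal $\sigma_*$; only the POVM on the auxiliary system needs to depend on $x$.

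Next, I would use the standard partial-trace identity $\Tr_B(\Tr_A(Z)\, Y) = \Tr_{AB}(Z(\I \otimes Y))$ to rewrite the experimental probabilities as
\begin{equation}
\Tr(\sigma_{a|x} M_{b|y}) = \Tr\bigl(\Tr_A(|\psi\rangle\langle\psi|(N_{a|x} \otimes \I))\, M_{b|y}\bigr) = \Tr\bigl(|\psi\rangle\langle\psi|(N_{a|x} \otimes M_{b|y})\bigr).
\end{equation}
This expression is exactly the Born probability of a standard bipartite Bell-type experiment on the $d \otimes d$ state $|\psi\rangle\langle\psi|$ with local POVMs $\{N_{a|x}\}$ and $\{M_{b|y}\}$, each acting on a $d$-dimensional factor. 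Applying the hypothesized dimension-dependent bound then yields $W([\Tr(|\psi\rangle\langle\psi|(N_{a|x} \otimes M_{b|y}))]) \leq X_d$, which is the desired inequality.

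The only genuine obstacle is the GHJW step: the theorem is usually stated for a single ensemble decomposition, whereas here we need it applied to a whole family of decompositions indexed by $x$. One must therefore emphasize that since the purification depends only on the marginal $\sigma_*$, the same state $|\psi\rangle$ can implement every $\sigma_{a|x}$ through an $x$-dependent POVM, and that the auxiliary system can indeed be chosen to have dimension $d$ because $\mathrm{rank}(\sigma_*) \leq d$. Once this is made explicit, the rest of the proof is a one-line manipulation of the partial trace followed by an immediate invocation of the assumption on $W$.
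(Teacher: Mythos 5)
Your proof is correct and follows essentially the same route as the paper's: both invoke the GHJW theorem to write $\sigma_{a|x} = \Tr_A(\rho(N_{a|x} \otimes \I))$ for a single $d \otimes d$ state and $x$-dependent POVMs, rewrite $\Tr(\sigma_{a|x} M_{b|y})$ as $\Tr(\rho(N_{a|x} \otimes M_{b|y}))$, and apply the hypothesized bound $X_d$. Your additional remarks on why one purification suffices for all $x$ and why $\mathrm{rank}(\sigma_*) \leq d$ permits a $d$-dimensional auxiliary system are welcome clarifications but do not change the argument.
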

\begin{proof}
GHJW theorem \cite{schrodinger1935discussion,hadjisavvas1981properties,gisin1989stochastic,hughston1993complete,sainz2015postquantum} states that since $\sigma_{a|x}$ is a no-signaling assemblage there is a quantum state $\rho \in \bound(\Ha_d \otimes \Ha_d)$ and set of POVMs $N_{a|x} \in \bound(\Ha_d)$ such that
\begin{equation}
\sigma_{a|x} = \Tr_A (\rho (N_{a|x} \otimes \I))
\end{equation}
where $\Tr_A$ denotes the partial trace over the first subsystem. We thus get
\begin{equation}
\Tr(\sigma_{a|x} M_{b|y}) = \Tr(\rho (N_{a|x} \otimes M_{b|y}))
\end{equation}
and thus
\begin{equation}
W([\Tr(\sigma_{a|x} M_{b|y})]) = W([\Tr(\rho (N_{a|x} \otimes M_{b|y}))]) \leq X_d.
\end{equation}
\end{proof}

\section{See-saw algorithm to maximize the violation of \cglmp with separable preparations}
Let $\Ha = \Ha_2 \otimes \Ha_2$ where $\dim(\Ha_2) = 2$ and $a,b \in \{0, 1, 2, 3\}$ and $x,y \in \{0, 1\}$. If we fix set of POVMs $M_{b|y} \in \bound(\Ha)$, then finding the separable subnormalized preparations $\sigma_{a|x} \in \bound(\Ha)$ that maximize the violation of the \cglmp inequality amounts to solving the following SDP:
\begin{equation} \label{eq:seesaw-assemblage}
\begin{split}
\text{given} \quad &M_{b|y} \\
\text{maximize} \quad & p(a \leq b|00) + p(a \geq b|01)+ p(a \geq b|10) - p(a \geq b|11) - 2 \\
\text{such that} \quad &p(ab|xy) = \Tr(\sigma_{a|x} M_{b|y}) \\
&\sigma_{a|x} \geq 0, \forall a, x \\
&\sigma_{a|x}^{\intercal_1} \geq 0, \forall a, x \\
&\sum_a \sigma_{a|x} = \sigma_*, \forall x \\
&\Tr(\sigma_*) = 1.
\end{split}
\end{equation}
Here $A^{\intercal_1}$ denotes the partial transpose of $A$ over the first subsystem and $p(a \leq b|00)$ is the sum of all of the terms $p(ab|00)$ where $a \leq b$, and analogically for the other terms. Alternately, if we fix the subnormalized preparations $\sigma_{a|x} \in \bound(\Ha)$, then finding the set of POVMs $M_{b|y} \in \bound(\Ha)$ that maximize the violation of the \cglmp inequality amounts to solving a similar SDP:
\begin{equation} \label{eq:seesaw-measurements}
\begin{split}
\text{given} \quad &\sigma_{a|x} \\
\text{maximize} \quad & p(a \leq b|00) + p(a \geq b|01)+ p(a \geq b|10) - p(a \geq b|11) - 2 \\
\text{such that} \quad &p(ab|xy) = \Tr(\sigma_{a|x} M_{b|y}) \\
&M_{b|y} \geq 0, \forall b, y \\
&\sum_b M_{b|y} = \I.
\end{split}
\end{equation}

In order to find the best possible violation of the \cglmp inequality with separable preparations, we do the following: fix random projective measurements $M_{b|y}$ and numerically solve \eqref{eq:seesaw-assemblage} to find the highest violation of the \cglmp inequality with separable subnormalized preparations. Then numerically solve \eqref{eq:seesaw-measurements} to find set of POVMs with potentially higher violation of the \cglmp inequality. \eqref{eq:seesaw-assemblage} and \eqref{eq:seesaw-measurements} are alternately solved until the same result is obtained ten times in a row. Then a different random projective measurements $M_{b|y}$ are generated and the process is repeated.

\section{Experimental settings and results}
\setcounter{figure}{0} \renewcommand{\thefigure}{A.\arabic{figure}} 
\setcounter{table}{0} \renewcommand{\thetable}{A.\arabic{table}}

\begin{table}[h!]
\begin{minipage}[t]{0.475\linewidth}
\caption{\label{A1}Orientation of the HWP for the state preparation for the CHSH experiment (in degrees).}
\begin{tabular}{c|c}
 \hline \hline
 Preparation state & HWP \\ \hline
 $\sigma_{0|0}$& 0 \\
 $\sigma_{0|1}$& 45 \\
 $\sigma_{1|0}$& 22.5 \\
 $\sigma_{1|1}$& -22.5 \\
 \hline \hline
\end{tabular}
\end{minipage}
\hfill
\begin{minipage}[t]{0.475\linewidth}
\caption{\label{A2}Orientation of the HWP for the measurement for the CHSH experiment (in degrees).}
\begin{tabular}{c|c}
 \hline \hline
 Measurement basis & HWP \\ \hline
 $M_{0}$& -78.75 \\
 $M_{1}$& 33.75 \\
  \hline \hline
\end{tabular}
\end{minipage}

\end{table}

\begin{table}[h!]
\caption{\label{A3}Rotation settings of the six out of eight active wave plates (in degrees) and phase shifter setting in "experimental-friendly" realization of \cglmp. Preparation of the separable states in the two preparation basis $M_{b|0}$ respectively $M_{b|1}$.}
\begin{ruledtabular}
\begin{tabular}{c|ccccccccc}
 Prep. & HWP1\textsubscript{a} & QWP1\textsubscript{a} & HWP2\textsubscript{a}& QWP2\textsubscript{a} & HWP1\textsubscript{b} & QWP1\textsubscript{b} & HWP2\textsubscript{b}& QWP2\textsubscript{b} & PP \\ \hline
 $M_{0|0}$&119.22&307.08&-&21.38&188.96&50.84&-&355.50&$\pi/2$ \\
 $M_{1|0}$&99.94&300.34&-&349.54&131.88&358.26&-&212.92&$\pi/2$ \\
 $M_{2|0}$&60.78&232.94&-&158.64&171.04&309.16&-&4.50&$\pi/2$ \\
 $M_{3|0}$&260.06&239.66&-&190.46&103.46&202.42&-&237.08&$\pi/2$ \\
 &&&&&&&&& \\
 $M_{0|1}$&119.22&217.06&-&111.36&98.96&320.84&-&265.5&$0$ \\
 $M_{1|1}$&260.06&329.66&-&280.46&138.12&271.74&-&57.08&$0$ \\
 $M_{2|1}$&60.78&142.94&-&248.64&261.04&219.16&-&94.50&$0$ \\
 $M_{3|1}$&239.14&38.74&-&169.54&346.54&67.58&-&32.92&$0$ \\
\end{tabular}
\end{ruledtabular}
\end{table}

\begin{table}[h!]
\caption{\label{A4}The orientation of the wave plates and phase shifter in realization of \cglmp experiment. Preparation of the states in the two preparation basis $M_{b|0}$ respectively $M_{b|1}$.}
\begin{ruledtabular}
\begin{tabular}{c|ccccccccc}
 Prep. & HPW1\textsubscript{a} & QWP1\textsubscript{a} & HPW2\textsubscript{a}& QWP2\textsubscript{a} & HWP1\textsubscript{b} & QWP1\textsubscript{b} & HPW2\textsubscript{b}& QWP2\textsubscript{b} & PP \\ \hline
 $M_{0|0}$&18.24&45.00&-39.38&45.00&26.76&45.00&-39.38&45.00&$3\pi/4$ \\
 $M_{1|0}$&18.24&45.00&-16.88&45.00&26.76&45.00&-16.88&45.00&$-\pi/4$ \\
 $M_{2|0}$&18.24&45.00&-84.38&45.00&26.76&45.00&-84.38&45.00&$3\pi/4$ \\
 $M_{3|0}$&18.24&45.00&-61.88&45.00&26.76&45.00&-61.88&45.00&$-\pi/4$ \\
 &&&&&&&&& \\
 $M_{0|1}$&18.24&45.00&-50.63&45.00&26.76&45.00&-50.63&45.00&$\pi/4$ \\
 $M_{1|1}$&18.24&45.00&-28.126&45.00&26.76&45.00&-28.126&45.00&$-3\pi/4$ \\
 $M_{2|1}$&18.24&45.00&-5.63&45.00&26.76&45.00&-5.63&45.00&$\pi/4$ \\
 $M_{3|1}$&18.24&45.00&-73.126&45.00&26.76&45.00&-73.126&45.00&$-3\pi/4$ \\
\end{tabular}
\end{ruledtabular}
\end{table}

\begin{table}[h!]
\caption{\label{A5}Orientation of the HWP for the measurement for the \cglmp experiment for measurement basis $\sigma_0$ respectively $\sigma_1$ (in degrees).}
\begin{tabular}{c|cccc}
 \hline \hline
 Measurement basis & QWP\textsubscript{a} & HWP\textsubscript{a} & QWP\textsubscript{b}& HWP\textsubscript{b} \\ \hline
 $\sigma_0$&45.00&22.50&0.00&22.50 \\
 $\sigma_1$&45.00&11.25&-45.00&11.25 \\
  \hline \hline
\end{tabular}
\end{table}

\begin{table}[h!]
\caption{\label{A6}Experimental and theoretical values for each  preparation-measure set of the CHSH experiment. }
\begin{minipage}[t]{0.475\linewidth}
\begin{tabular}{c|cc}
 \hline\hline
   & Expected & Measured \\ \hline
 $M_{1}\sigma_{1|+}$ & 0.3536 & 0.3516 \\
 $M_{1}\sigma_{1|-}$ & -0.3536 & -0.3294 \\
 $M_{1}\sigma_{2|+}$ & 0.3536 & 0.3690 \\
 $M_{1}\sigma_{2|-}$ & -0.3536 & -0.3517 \\
 \hline\hline
\end{tabular}
\end{minipage}
\hfill
\begin{minipage}[t]{0.475\linewidth}
\begin{tabular}{c|cc}
 \hline\hline
   & Expected & Measured \\ \hline
 $M_{2}\sigma_{1|+}$ & 0.3536 & 0.3596 \\
 $M_{2}\sigma_{1|-}$ & -0.3536 & -0.3411 \\
 $M_{2}\sigma_{2|+}$ & -0.3536 & -0.3411 \\
 $M_{2}\sigma_{2|-}$ & 0.3536 & 0.3587 \\
 \hline\hline
\end{tabular}
\end{minipage}

\end{table}

\begin{table}[H]
\caption{\label{A7}Experimental and theoretical probability values of the experimental-friendly \cglmp using separable states. }
\begin{minipage}[t]{0.475\linewidth}
\centering
\begin{tabular}{c|cc}
 \hline\hline
 Set $\sigma_{a|0}M_{b|0}$ & Expected & Measured \\ \hline
 $\Tr(\sigma_{0|0}M_{0|0})$ & 0.8211 & 0.8136 \\
 $\Tr(\sigma_{0|0}M_{1|0})$ & 0.0453 & 0.0172 \\
 $\Tr(\sigma_{1|0}M_{1|0})$ & 0.8211 & 0.8398 \\
 $\Tr(\sigma_{0|0}M_{2|0})$ & 0.0324 & 0.0189 \\
 $\Tr(\sigma_{1|0}M_{2|0})$ & 0.0453 & 0.0319 \\
 $\Tr(\sigma_{2|0}M_{2|0})$ & 0.8211 & 0.8540 \\
 $\Tr(\sigma_{0|0}M_{3|0})$ & 0.1012 & 0.0967 \\
 $\Tr(\sigma_{1|0}M_{3|0})$ & 0.0324 & 0.0908 \\
 $\Tr(\sigma_{2|0}M_{3|0})$ & 0.0453 & 0.0455 \\
 $\Tr(\sigma_{3|0}M_{3|0})$ & 0.8211 & 0.7670 \\
 \hline\hline
\end{tabular}
\end{minipage}
\hfill
\begin{minipage}[t]{0.475\linewidth}
\centering
\begin{tabular}{c|cc}
 \hline\hline
 Set $\sigma_{a|1}M_{b|0}$ & Expected & Measured \\ \hline
 $\Tr(\sigma_{0|1}M_{0|0})$ & 0.8211 & 0.7845 \\
 $\Tr(\sigma_{1|1}M_{0|0})$ & 0.0453 & 0.0546 \\
 $\Tr(\sigma_{2|1}M_{0|0})$ & 0.0324 & 0.0735 \\
 $\Tr(\sigma_{3|1}M_{0|0})$ & 0.1012 & 0.0874 \\
 $\Tr(\sigma_{1|1}M_{1|0})$ & 0.8211 & 0.9223 \\
 $\Tr(\sigma_{2|1}M_{1|0})$ & 0.0453 & 0.0034 \\
 $\Tr(\sigma_{3|1}M_{1|0})$ & 0.0324 & 0.0178 \\
 $\Tr(\sigma_{2|1}M_{2|0})$ & 0.8211 & 0.7330 \\
 $\Tr(\sigma_{3|1}M_{3|0})$ & 0.0453 & 0.0818 \\
 $\Tr(\sigma_{3|1}M_{3|0})$ & 0.8211 & 0.8027 \\
 \hline\hline
\end{tabular}
\end{minipage}

\vspace{1em}
\begin{minipage}[t]{0.475\linewidth}
\centering
\begin{tabular}{c|cc}
 \hline\hline
 Set $\sigma_{a|0}M_{b|1}$ & Expected & Measured \\ \hline
 $\Tr(\sigma_{0|0}M_{0|1})$ & 0.8211 & 0.7342 \\
 $\Tr(\sigma_{1|0}M_{0|1})$ & 0.0453 & 0.0962 \\
 $\Tr(\sigma_{2|0}M_{0|1})$ & 0.0324 & 0.0662 \\
 $\Tr(\sigma_{3|0}M_{0|1})$ & 0.1012 & 0.1034 \\
 $\Tr(\sigma_{1|0}M_{1|1})$ & 0.8211 & 0.8387 \\
 $\Tr(\sigma_{2|0}M_{1|1})$ & 0.0453 & 0.0690 \\
 $\Tr(\sigma_{3|0}M_{1|1})$ & 0.0324 & 0.0087 \\
 $\Tr(\sigma_{2|0}M_{2|1})$ & 0.8211 & 0.7635 \\
 $\Tr(\sigma_{3|0}M_{2|1})$ & 0.0453 & 0.0478 \\
 $\Tr(\sigma_{3|0}M_{3|1})$ & 0.8211 & 0.8729 \\
 \hline\hline
\end{tabular}
\end{minipage}
\hfill
\begin{minipage}[t]{0.475\linewidth}
\centering
\begin{tabular}{c|cc}
 \hline\hline
 Set $\sigma_{a|1}M_{b|1}$ & Expected & Measured \\ \hline
 $\Tr(\sigma_{0|1}M_{0|1})$ & 0.1012 & 0.0727 \\
 $\Tr(\sigma_{1|1}M_{0|1})$ & 0.0324 & 0.0853 \\
 $\Tr(\sigma_{2|1}M_{0|1})$ & 0.0453 & 0.0496 \\
 $\Tr(\sigma_{3|1}M_{0|1})$ & 0.8211 & 0.7925 \\
 $\Tr(\sigma_{1|1}M_{1|1})$ & 0.1012 & 0.1017 \\
 $\Tr(\sigma_{2|1}M_{1|1})$ & 0.0324 & 0.0312 \\
 $\Tr(\sigma_{3|1}M_{1|1})$ & 0.0453 & 0.0625 \\
 $\Tr(\sigma_{2|1}M_{2|1})$ & 0.1012 & 0.1028 \\
 $\Tr(\sigma_{3|1}M_{2|1})$ & 0.0324 & 0.0649 \\
 $\Tr(\sigma_{3|3}M_{3|1})$ & 0.1012 & 0.0772 \\
 \hline\hline
\end{tabular}
\end{minipage}

\end{table}

The conditional probabilities in \ref{A7} are summed up to:
\begin{align}
\begin{split}
p(a \leq b|00) &=
\Tr(\sigma_{0|0}M_{0|0}) + \Tr(\sigma_{0|0}M_{1|0}) + \Tr(\sigma_{1|0}M_{1|0}) + \Tr(\sigma_{0|0}M_{2|0}) + \Tr(\sigma_{1|0}M_{2|0}) \\
&+ \Tr(\sigma_{2|0}M_{2|0}) + \Tr(\sigma_{0|0}M_{3|0}) + \Tr(\sigma_{1|0}M_{3|0}) + \Tr(\sigma_{2|0}M_{3|0}) + \Tr(\sigma_{3|0}M_{3|0}) = 0.8939
\end{split} \\
\begin{split}
p(a \geq b|10) &=
\Tr(\sigma_{0|1}M_{0|0}) + \Tr(\sigma_{1|1}M_{0|0}) + \Tr(\sigma_{2|1}M_{0|0}) + \Tr(\sigma_{3|1}M_{0|0}) + \Tr(\sigma_{1|1}M_{1|0}) \\
&+ \Tr(\sigma_{2|1}M_{1|0}) + \Tr(\sigma_{3|1}M_{1|0}) + \Tr(\sigma_{2|1}M_{2|0}) + \Tr(\sigma_{3|1}M_{2|0}) + \Tr(\sigma_{3|1}M_{3|0}) = 0.8952
\end{split} \\
\begin{split}
p(a \geq b|01) &=
\Tr(\sigma_{0|0}M_{0|1}) + \Tr(\sigma_{1|0}M_{0|1}) + \Tr(\sigma_{2|0}M_{0|1}) + \Tr(\sigma_{3|0}M_{0|1}) + \Tr(\sigma_{1|0}M_{1|1}) \\
&+ \Tr(\sigma_{2|0}M_{1|1}) + \Tr(\sigma_{3|0}M_{1|1}) + \Tr(\sigma_{2|0}M_{2|1}) + \Tr(\sigma_{3|0}M_{2|1}) + \Tr(\sigma_{3|0}M_{3|1}) = 0.9001
\end{split} \\
\begin{split}
p(a \geq b|11) &=
\Tr(\sigma_{0|1}M_{0|1}) + \Tr(\sigma_{1|1}M_{0|1}) + \Tr(\sigma_{2|1}M_{0|1}) + \Tr(\sigma_{3|1}M_{0|1}) + \Tr(\sigma_{1|1}M_{1|1}) \\
&+ \Tr(\sigma_{2|1}M_{1|1}) + \Tr(\sigma_{3|1}M_{1|1}) + \Tr(\sigma_{2|1}M_{2|1}) + \Tr(\sigma_{3|1}M_{2|1}) + \Tr(\sigma_{3|1}M_{3|1}) = 0.3601
\end{split}
\end{align}
Which results in the experimental-friendly \cglmp violation:
\begin{equation}
p(a \leq b|00) + p(a \geq b|01)+ p(a \geq b|10) - p(a \geq b|11) - 2 = 0.3292.
\end{equation}

\begin{table}[H]
\caption{\label{A8}Experimental and theoretical probability values of the \cglmp experiment.}
\begin{minipage}[t]{0.475\linewidth}
\centering
\begin{tabular}{c|cc}
 \hline\hline
 Set $\sigma_{a|0}M_{b|0}$ & Expected & Measured \\ \hline
 $\Tr(\sigma_{0|0}M_{0|0})$ & 0.7833 & 0.8158 \\
 $\Tr(\sigma_{0|0}M_{1|0})$ & 0.1050 & 0.1035 \\
 $\Tr(\sigma_{1|0}M_{1|0})$ & 0.7833 & 0.7817 \\
 $\Tr(\sigma_{0|0}M_{2|0})$ & 0.0547 & 0.0589 \\
 $\Tr(\sigma_{1|0}M_{2|0})$ & 0.1050 & 0.1006 \\
 $\Tr(\sigma_{2|0}M_{2|0})$ & 0.7833 & 0.7861 \\
 $\Tr(\sigma_{0|0}M_{3|0})$ & 0.0569 & 0.0380 \\
 $\Tr(\sigma_{1|0}M_{3|0})$ & 0.0547 & 0.0416 \\
 $\Tr(\sigma_{2|0}M_{3|0})$ & 0.1050 & 0.1129 \\
 $\Tr(\sigma_{3|0}M_{3|0})$ & 0.7833 & 0.8074 \\
 \hline\hline
\end{tabular}
\end{minipage}
\hfill
\begin{minipage}[t]{0.475\linewidth}
\centering
\begin{tabular}{c|cc}
 \hline\hline
 Set $\sigma_{a|1}M_{b|0}$ & Expected & Measured \\ \hline
 $\Tr(\sigma_{0|1}M_{0|0})$ & 0.7833 & 0.8113 \\
 $\Tr(\sigma_{1|1}M_{0|0})$ & 0.1050 & 0.1122 \\
 $\Tr(\sigma_{2|1}M_{0|0})$ & 0.0547 & 0.0554 \\
 $\Tr(\sigma_{3|1}M_{0|0}0$ & 0.0569 & 0.0211 \\
 $\Tr(\sigma_{1|1}M_{1|0})$ & 0.7833 & 0.7969 \\
 $\Tr(\sigma_{2|1}M_{1|0})$ & 0.1050 & 0.1281 \\
 $\Tr(\sigma_{3|1}M_{1|0})$ & 0.0547 & 0.0301 \\
 $\Tr(\sigma_{2|1}M_{2|0})$ & 0.7833 & 0.8597 \\
 $\Tr(\sigma_{3|1}M_{3|0})$ & 0.0569 & 0.0211 \\
 $\Tr(\sigma_{3|1}M_{3|0})$ & 0.7833 & 0.7833 \\
 \hline\hline
\end{tabular}
\end{minipage}

\vspace{1em}
\begin{minipage}[t]{0.475\linewidth}
\centering
\begin{tabular}{c|cc}
 \hline\hline
 Set $\sigma_{a|0}M_{b|1}$ & Expected & Measured \\ \hline
 $\Tr(\sigma_{0|0}M_{0|1})$ & 0.7833 & 0.8013 \\
 $\Tr(\sigma_{1|0}M_{0|1})$ & 0.1050 & 0.1277 \\
 $\Tr(\sigma_{2|0}M_{0|1})$ & 0.0547 & 0.0500 \\
 $\Tr(\sigma_{3|0}M_{0|1})$ & 0.0569 & 0.0210 \\
 $\Tr(\sigma_{1|0}M_{1|1})$ & 0.7833 & 0.7607 \\
 $\Tr(\sigma_{2|0}M_{1|1})$ & 0.1050 & 0.1275 \\
 $\Tr(\sigma_{3|0}M_{1|1})$ & 0.0547 & 0.0557 \\
 $\Tr(\sigma_{2|0}M_{2|1})$ & 0.7833 & 0.8030 \\
 $\Tr(\sigma_{3|0}M_{2|1})$ & 0.1050 & 0.0986 \\
 $\Tr(\sigma_{3|0}M_{3|1})$ & 0.7833 & 0.7712 \\
 \hline\hline
\end{tabular}
\end{minipage}
\hfill
\begin{minipage}[t]{0.475\linewidth}
\centering
\begin{tabular}{c|cc}
 \hline\hline
 Set $\sigma_{a|1}M_{b|1}$ & Expected & Measured \\ \hline
 $\Tr(\sigma_{0|1}M_{0|1})$ & 0.0569 & 0.0222 \\
 $\Tr(\sigma_{1|1}M_{0|1})$ & 0.0547 & 0.0546 \\
 $\Tr(\sigma_{2|1}M_{0|1})$ & 0.1050 & 0.0999 \\
 $\Tr(\sigma_{3|1}M_{0|1})$ & 0.7833 & 0.8232 \\
 $\Tr(\sigma_{1|1}M_{1|1})$ & 0.0569 & 0.0437 \\
 $\Tr(\sigma_{2|1}M_{1|1})$ & 0.0547 & 0.0750 \\
 $\Tr(\sigma_{3|1}M_{1|1})$ & 0.1050 & 0.1255 \\
 $\Tr(\sigma_{2|1}M_{2|1})$ & 0.0569 & 0.0727 \\
 $\Tr(\sigma_{3|1}M_{2|1})$ & 0.0547 & 0.0596 \\
 $\Tr(\sigma_{3|3}M_{3|1})$ & 0.0569 & 0.0538 \\
 \hline\hline
\end{tabular}
\end{minipage}

\end{table}

The conditional probabilities in \ref{A8} are summed up to:
\begin{align}
\begin{split}
p(a \leq b|00) &=
\Tr(\sigma_{0|0}M_{0|0}) + \Tr(\sigma_{0|0}M_{1|0}) + \Tr(\sigma_{1|0}M_{1|0}) + \Tr(\sigma_{0|0}M_{2|0}) + \Tr(\sigma_{1|0}M_{2|0}) \\
&+ \Tr(\sigma_{2|0}M_{2|0}) + \Tr(\sigma_{0|0}M_{3|0}) + \Tr(\sigma_{1|0}M_{3|0}) + \Tr(\sigma_{2|0}M_{3|0}) + \Tr(\sigma_{3|0}M_{3|0}) = 0.9116
\end{split} \\
\begin{split}
p(a \geq b|10) &=
\Tr(\sigma_{0|1}M_{0|0}) + \Tr(\sigma_{1|1}M_{0|0}) + \Tr(\sigma_{2|1}M_{0|0}) + \Tr(\sigma_{3|1}M_{0|0}) + \Tr(\sigma_{1|1}M_{1|0}) \\
&+ \Tr(\sigma_{2|1}M_{1|0}) + \Tr(\sigma_{3|1}M_{1|0}) + \Tr(\sigma_{2|1}M_{2|0}) + \Tr(\sigma_{3|1}M_{2|0}) + \Tr(\sigma_{3|1}M_{3|0}) = 0.9042
\end{split} \\
\begin{split}
p(a \geq b|01) &=
\Tr(\sigma_{0|0}M_{0|1}) + \Tr(\sigma_{1|0}M_{0|1}) + \Tr(\sigma_{2|0}M_{0|1}) + \Tr(\sigma_{3|0}M_{0|1}) + \Tr(\sigma_{1|0}M_{1|1}) \\
&+ \Tr(\sigma_{2|0}M_{1|1}) + \Tr(\sigma_{3|0}M_{1|1}) + \Tr(\sigma_{2|0}M_{2|1}) + \Tr(\sigma_{3|0}M_{2|1}) + \Tr(\sigma_{3|0}M_{3|1}) = 0.9048
\end{split} \\
\begin{split}
p(a \geq b|11) &=
\Tr(\sigma_{0|1}M_{0|1}) + \Tr(\sigma_{1|1}M_{0|1}) + \Tr(\sigma_{2|1}M_{0|1}) + \Tr(\sigma_{3|1}M_{0|1}) + \Tr(\sigma_{1|1}M_{1|1}) \\
&+ \Tr(\sigma_{2|1}M_{1|1}) + \Tr(\sigma_{3|1}M_{1|1}) + \Tr(\sigma_{2|1}M_{2|1}) + \Tr(\sigma_{3|1}M_{2|1}) + \Tr(\sigma_{3|1}M_{3|1}) = 0.3576
\end{split}
\end{align}
Which results in the experimental-friendly \cglmp violation:
\begin{equation}
p(a \leq b|00) + p(a \geq b|01)+ p(a \geq b|10) - p(a \geq b|11) - 2 = 0.3631.
\end{equation}

\end{document}